  \theoremstyle{plain}
  \newtheorem{theorem}{Theorem}
  \newtheorem{lemma}{Lemma}
  \newtheorem{observation}{Observation}
  \theoremstyle{definition}
  \newtheorem{definition}{Definition}
  \newtheorem{example}{Example}
\title{Palindromic Decompositions with Gaps and Errors}
\author[1]{Michał Adamczyk}
\author[2]{Mai Alzamel}
\author[2]{Panagiotis Charalampopoulos}
\author[2]{Costas S. Iliopoulos}
\author[1,2]{Jakub Radoszewski}
\affil[1]{
Institute of Informatics, University of Warsaw, Warsaw, Poland\\
\texttt{[michal.adamczyk,jrad]@mimuw.edu.pl}
}
\affil[2]{
Department of Informatics, King's College London, London, UK\\
\texttt{[mai.alzamel,panagiotis.charalampopoulos,costas.iliopoulos]@kcl.ac.uk}
}
\date{\vspace{-5ex}}
  \def\dd{\mathinner{\ldotp\ldotp}} 
  \newcolumntype{C}[1]{>{\centering\let\newline\\\arraybackslash\hspace{0pt}}m{#1}}
  \newcommand{\Oh}{\mathcal{O}}
  \newcommand{\F}{\mathcal{F}}
 \newcommand{\defproblem}[3]{
  \vspace{2mm}
\noindent\fbox{
  \begin{minipage}{0.96\textwidth}
  #1\\
  {\bf{Input:}} #2  \\
  {\bf{Output:}} #3
  \end{minipage}
  }
  \vspace{2mm}
}
\begin{document}
\maketitle

\begin{abstract}  
Identifying palindromes in sequences has been an interesting line of research in combinatorics on words and also in computational biology, after the discovery of the relation of palindromes in the DNA sequence with the HIV virus. Efficient algorithms for the factorization of sequences into palindromes and maximal palindromes have been devised in recent years. We extend these studies by allowing gaps in decompositions and errors in palindromes, and also imposing a lower bound to the length of acceptable palindromes.

We first present an algorithm for obtaining a palindromic decomposition of a string of length $n$ with the minimal total gap length in time $\Oh(n \log{n} \cdot g)$ and space $\Oh(n \cdot g)$, where $g$ is the number of allowed gaps in the decomposition. We then consider a decomposition of the string in maximal $\delta$-palindromes (i.e.~palindromes with $\delta$ errors under the edit or Hamming distance) and $g$ allowed gaps. We present an algorithm to obtain such a decomposition with the minimal total gap length in time $\Oh(n \cdot (g+\delta))$ and space $\Oh(n\cdot g)$.
\end{abstract}

  \section{Introduction}
  A palindrome is a symmetric word that reads the same backward and forward. The detection of palindromes is a classical and well-studied problem in computer science, language theory and algorithm design with a lot of variants arising out of different practical scenarios. 
  String and sequence algorithms related to palindromes have long drawn the attention of stringology researchers~\cite{breslauer1995finding,galil1976real,kolpakov2009searching}. Interestingly, in the seminal Knuth-Morris-Pratt paper presenting the well-known string matching algorithm~\cite{knuth1977fast}, a problem related to palindrome recognition was also considered. In combinatorics on words, for example, studies have investigated the inhabitation of palindromes in Fibonacci words or Sturmian words in general \cite{droubay1995palindromes,DBLP:journals/tcs/DroubayP99}.
There is also an interesting conjecture related to periodicity of infinite strings whose every factor can be decomposed into a bounded number of palindromes~\cite{Frid:2013:PFW:2612801.2612879}.
  
  In computational biology, palindromes play an important role in regulation of gene activity and other cell processes because these are often observed near promoters, introns and specific untranslated regions. Hairpins (also called complemented palindromes) in the HIV virus are strings of the form $x \bar{x}^R$, where $\bar{x}^R$ is the reverse complement of $x$, while (even) palindromes are strings of the form $x x^R$. Algorithms for detecting palindromes can often be adapted to compute hairpins as well. Hence, we can identify palindromes in the DNA sequence and then align the part of the DNA sequence that contains them with the HIV virus.
  
  In the beginnings of algorithmic study of palindromes, Manacher discovered an on-line sequential algorithm that finds all initial palindromes in a string \cite{manacher1975new}. A string $S[1 \dd n]$ is said to have an initial palindrome of length $k$ if $S[1\dd k]$ is a palindrome. Later Apostolico et al.\ observed that the algorithm given by \cite{manacher1975new} is able to find all maximal palindromic factors in the string in $\Oh(n)$ time~\cite{apostolico1995parallel}. Gusfield gave another linear-time algorithm to find all maximal palindromes in a string~\cite{Gusfield:1997:AST:262228}. He also discussed the relation between biological sequences and gapped (separated) palindromes (i.e. strings of the form $x v\bar{x}^R$).
 Gupta et al.~\cite{gupta2015searching} presented an $\Oh(n)$-time algorithm to compute specific classes---based on length constraints---of such palindromes. Algorithms for finding the so-called gapped palindromes were also considered in \cite{DBLP:conf/iwoca/FujishigeNIBT16,kolpakov2009searching}. (In our study, we consider gaps \emph{between} palindromes, not inside them.)
  
  A problem that gained significant attention recently was decomposing a string into a minimal number of palindromes; any such decomposition is called a palindromic factorization. Fici et al.\ \cite{Fici:2014:SAM:2953214.2953656} presented an on-line $\Oh(n \log n)$-time algorithm for computing a palindromic factorization of a string of length $n$. A similar on-line algorithm was presented by I et al.\ \cite{I2014} as well as an on-line algorithm with the same time complexity to factorize a string into maximal palindromes. Alatabbi et al.\ gave an off-line $\Oh(n)$-time algorithm for the latter problem~\cite{alatabbi2013maximal}. In addition, Rubinchik and Shur~\cite{DBLP:conf/iwoca/RubinchikS15} devised an $\Oh(n)$-sized data structure that helps locate palindromes in a string; they also show how it can be used to compute the palindromic factorization of a string in $\Oh(n \log n)$ time.

A similar problem, first studied by Galil and Seiferas in~\cite{Galil:1978:LOR:322047.322056}, asked whether a given string  can be decomposed into $k$ palindromes. Galil and Seiferas~\cite{Galil:1978:LOR:322047.322056} presented an on-line $\Oh(n)$-time algorithm for $k=1,2$ and an off-line $\Oh(n)$-time algorithm for $k=3,4$. In 2014, Kosolobov et al.\ presented an on-line $\Oh(kn)$-time algorithm to decide this for arbitrary $k$~\cite{DBLP:conf/sofsem/KosolobovRS15}. 
 
  Our work is a continuation of this line of research, motivated by possible errors and inconsistencies in the biological data. We extend the previous work by introducing a constraint on the length of the palindromes and allowing gaps and errors in the decompositions. By \emph{gaps} we mean regions of the string that are not decomposed into palindromes of sufficient length. We allow \emph{errors} in the palindromes, so that a \emph{palindrome with errors} is a string having a small Hamming or edit distance from an ideal palindrome. We present two approaches for decomposing a string into sufficiently long palindromes; one allowing only gaps in the decomposition and the other allowing both gaps in the decomposition and errors in the palindromes. We first present an algorithm that computes a palindromic decomposition with the minimal total gap length of a string of length $n$ in time $\Oh(n \log{n} \cdot g)$ and space $\Oh(n \cdot g)$, where $g$ is the number of allowed gaps. Secondly, we present an $\Oh(n \cdot (g + \delta))$-time and $\Oh(n \cdot g)$-space algorithm for the decomposition of a string of length $n$ into maximal palindromes with at most $\delta$ errors each, under the Hamming or edit distance, and $g$ allowed gaps. The algorithms can be applied for both standard and complemented palindromes.

Our first result is a multistage generalization of the algorithm computing a palindromic factorization from \cite{Fici:2014:SAM:2953214.2953656}. We show that their approach can be extended to handle gaps, a constraint on the length of palindromes, and complemented palindromes as well. In our second result, the first step is the computation of maximal palindromes with at most $\delta$ errors each and the second step is a dynamic programming approach that takes as input these maximal palindromes and returns a decomposition of the string with the minimal total gap length. An $\Oh(n \cdot \delta)$ computation of maximal palindromes under the Hamming distance was presented in \cite{Gusfield:1997:AST:262228}. We present an equally efficient approach for computing all maximal palindromes under the edit distance. After our conference publication \cite{CSR2017}, we have learnt about alternative algorithms computing this type of maximal approximate palindromes \cite{DBLP:journals/pr/PortoB02,DBLP:journals/ijfcs/HsuCC10}. The approach of \cite{DBLP:journals/pr/PortoB02} works in $\Oh(n \cdot \delta^2)$ time. The algorithm from \cite{DBLP:journals/ijfcs/HsuCC10} has same time complexity as ours ($\Oh(n \cdot \delta)$), however, it is much more complex, as it was designed for a more general problem.

\section{Notation and terminology}
Let $S=S[1] S[2]\cdots S[n]$ be a \textit{string} of \textit{length} $|S|=n$ over an alphabet $\Sigma$. We consider the case of an integer alphabet; in this case each letter can be replaced by its rank so that the resulting string consists of integers in the range $\{1,\ldots,n\}$. For two positions $i$ and $j$, where $1 \leq i \leq j \leq n$, in $S$, we denote the \textit{factor} $S[i]S[i+1]\cdots S[j]$ of $S$ by $S[i\dd j]$.
We denote the reverse string of $S$ by $S^R$, i.e.~$S^R=S[n]S[n-1] \cdots S[1]$. The empty string (denoted by $\varepsilon$) is the unique string over $\Sigma$ of length 0. A string $S$ is said to be a \emph{palindrome} if and only if $S=S^R$.
If $S[i\dd j]$ is a palindrome, the number $\frac{i+j}{2}$ is called the center of $S[i\dd j]$. Let $S[i \dd j]$, where $1 \leq i \leq j \leq n$, be a palindromic factor in $S$. It is said to be a \emph{maximal palindrome} if there is no longer palindrome in $S$ with center $\frac{i+j}{2}$.  Note that a maximal palindrome can be a factor of another palindrome.

\begin{definition}
We say that $S=p_1 p_2 \cdots p_{\ell}$ is a (maximal) palindromic decomposition of $S$ if all the strings $p_i$ are (maximal) palindromes. 
\end{definition}

\begin{definition}
A (maximal) palindromic decomposition of $S$ such that the number of (maximal) palindromes is minimal is called a (maximal) palindromic factorization of $S$.
\end{definition}

Note that any single letter is a palindrome and, hence, every string can always be decomposed into palindromes. However, not every string can be decomposed into maximal palindromes; e.g. consider $S=\texttt{abaca}$~\cite{alatabbi2013maximal}. 

Let $f$ be an \emph{involution} on the alphabet $\Sigma$, i.e., a function such that $f^2=\mathrm{id}$. We extend $f$ into a morphism on strings over $\Sigma$. We say that a string $x$ is a \emph{generalized palindrome} if $x=f(x^R)$. Two known notions fit this definition:
  \begin{itemize}
    \item If $f=\mathrm{id}$, then a generalized palindrome is a standard palindrome.
    \item If $\Sigma=\{\mathtt{A},\mathtt{C},\mathtt{G},\mathtt{T}\}$ and
    $f(\mathtt{A})=\mathtt{T}$, $f(\mathtt{C})=\mathtt{G}$, $f(\mathtt{G})=\mathtt{C}$, $f(\mathtt{T})=\mathtt{A}$, then a generalized palindrome corresponds to a so-called complemented palindrome~\cite{Gusfield:1997:AST:262228}.
  \end{itemize}
  
  \begin{example}
  The string $\mathtt{A~G~T~A~C~T~T~C~A~T~G~A}$ is a standard palindrome and the string $\mathtt{T~A~G~T~C~G~A~C~T~A}$ is a complemented palindrome.
  \end{example}
  
  We also consider (generalized) palindromes with errors.
  Let us recall two well-known metrics on strings.
  Let $u$ and $v$ be two strings.
  If $|u|=|v|$, then the \emph{Hamming distance} between $u$ and $v$ is the number of positions where $u$ and $v$ do not match.
  The \emph{edit (or Levenshtein) distance} between $u$ and $v$ is the minimum number of edit operations (insertions, deletions, substitutions) needed to transform $u$ into $v$.
  We say that $x$ is a \emph{generalized $\delta$-palindrome} under the Hamming distance (or the edit distance) if
  the minimum Hamming distance (edit distance, respectively) from $x$ to any generalized palindrome is \emph{at most} $\delta$.
  
  A generalized palindrome $S[i\dd j]$ is called \emph{maximal}
  if there is no longer generalized palindrome with the same center.
  Similarly, a generalized $\delta$-palindrome $S[i\dd j]$ under the Hamming/edit distance is called \emph{maximal} if there is no longer generalized $\delta$-palindrome under the same distance measure with the same center.

\begin{example}
 All maximal 0-palindromes/1-palindromes in $\mathtt{GTATCG}$ (for $f = \mathrm{id}$) under the Hamming and under the edit distance are as follows: 

	\begin{center}
        \begin{tabular}    {|C{1.5cm}|C{0.8cm}|C{0.8cm}|C{0.8cm}|C{0.8cm}|C{1cm}|C{1.2cm}|C{0.8cm}|C{0.8cm}|C{0.8cm}|C{0.8cm}|C{0.8cm}|}
\hline
center & 1 & 1.5 & 2 & 2.5 & 3 & 3.5 & 4 &4.5&5&5.5&6\\
\hline
0& $\mathtt{G}$ & $\mathtt{\varepsilon}$ & $\mathtt{T}$ & $\mathtt{\varepsilon}$ & $\mathtt{TAT}$ & $\mathtt{\varepsilon}$ & $\mathtt{T}$ & $\mathtt{\varepsilon}$ & $\mathtt{C}$ & $\mathtt{\varepsilon}$ & $\mathtt{G}$
\\ \hline
1 under Hamming&$\mathtt{G}$ & $\mathtt{GT}$ & $\mathtt{GTA}$ & $\mathtt{TA}$ & $\mathtt{GTATC}$ & $\mathtt{AT}$ &$\mathtt{ATC}$&$\mathtt{TC}$&$\mathtt{TCG}$&$\mathtt{CG}$&$\mathtt{G}$
\\ \hline
1 under edit&$\mathtt{G}$ & $\mathtt{GT}$ & $\mathtt{GTA}$ & $\mathtt{GTAT}$ & $\mathtt{GTATC}$ & $\mathtt{GTATCG}$ & $\mathtt{ATC}$ & $\mathtt{TC}$ &$\mathtt{TCG}$&$\mathtt{CG}$&$\mathtt{G}$
\\ \hline
\end{tabular}
	\label{tab:ex5}
	\end{center}
For instance, the whole string \texttt{GTATCG} is a 1-palindrome under the edit distance, as deleting its fifth letter yields a palindrome \texttt{GTATG}.
\end{example}

The computational problems we study can be formally stated as follows.

{\defproblem{\textsc{Generalized Palindromic Decomposition with Gaps}}{A string $S$ of length $n$, an involution $f$, and integers $g,m \geq 1$}{A decomposition of $S$ into generalized palindromes with the minimal possible total length of gaps, $\sum_{i}^{q} |g_i|$, such that:
\begin{itemize}
\item There are at most $g$ gaps, i.e. $q \le g$
\item Each palindrome is of length at least $m$
\end{itemize}}}
  
{\defproblem{\textsc{Generalized Maximal $\delta$-Palindromic Decomposition with Gaps}}{A string $S$ of length $n$, an involution $f$, and integers $g,m,\delta \geq 1$}{A decomposition of $S$ into maximal generalized $\delta$-palindromes with the minimal possible total length of gaps, $\sum_{i}^{q} |g_i|$, such that:
\begin{itemize}
\item There are at most $g$ gaps, i.e. $q \le g$
\item Each generalized $\delta$-palindrome is of length at least $m$
\end{itemize}}}

We apply several instances of dynamic programming. For simplicity of presentation, we only show how to compute the minimal total length of gaps and omit describing the retrieval of the decomposition itself. To compute the latter, in each of the dynamic programming matrices we would store a pointer to the cell that gave us the minimum value so that we could actually compute the decomposition with the minimal total length of the gaps by backtracing.
  
\section{Palindromic decomposition with gaps}\label{sec:first}
In this section we develop an efficient solution to the \textsc{Generalized Palindromic Decomposition with Gaps} problem. It is based on several transformations of the algorithm for computing a palindromic factorization by Fici et al.\ \cite{Fici:2014:SAM:2953214.2953656}. For a string $S$ of length $n$ this algorithm works in $\Oh(n \log n)$ time. The algorithm consists of two steps:
  \begin{enumerate}
  \item Let $P_j$ be the sorted list of starting positions of all palindromes ending at position $j$ in $S$.
 This list may have size $\Oh(j)$. However, it follows from combinatorial properties of palindromes that the sequence of consecutive differences in $P_j$ is non-increasing and contains at most $\Oh(\log j)$ distinct values.
Let $P_{j,\Delta}$ be the maximal sublist of $P_j$ containing elements whose predecessor in $P_j$ is smaller by exactly $\Delta$.
Then there are $\Oh(\log j)$ such sublists in $P_j$.
 Hence, $P_j$ can be represented by a set $G_j$ of size $\Oh(\log j)$ which consists of triples of the form $(i,\Delta,k)$ that represent $P_{j,\Delta} = \{i,i+\Delta,\ldots,i+(k-1)\Delta\}$. 
The triples are sorted according to decreasing values of $\Delta$ and all starting positions in each triple are greater than in the previous one.
Fici et al.\ show that $G_j$ can be computed from $G_{j-1}$ in $\Oh(\log j)$ time.
 \item Let $PL[j]$ denote the number of palindromes in a palindromic factorization of $S[1\dd j]$.
Fici et al.\ show that it can be computed via a dynamic programming approach, using all palindromes from $G_j$ in $\Oh(\log j)$ time.
Their algorithm works as follows.
Let $PL_\Delta[j]$ be the minimum number of palindromes we can decompose $S[1 \dd j]$ in, provided that we use a palindrome from $(i,\Delta,k) \in G_j$.
Then $PL_\Delta[j]$ can be computed in constant time using $PL_\Delta[j-\Delta]$ based on the fact that if $(i,\Delta,k) \in G_j$ and $k \geq 2$, then $(i,\Delta,k-1) \in G_{j-\Delta}$.
Exploiting this fact, $PL_\Delta[j]$ can be computed by only considering $PL_\Delta[j-\Delta]$ and the shortest palindrome in $(i,\Delta,k)$. Finally, we compute $PL[j]$ from all such $PL_\Delta[j]$ values.
\end{enumerate}

In Appendix~\ref{app} we show for completeness that the same approach works for generalized palindromes for any involution $f$.

  To solve the \textsc{Generalized Palindromic Decomposition with Gaps} problem, we first need to modify each of the triples in $G_j$ to reflect the length constraint ($m$). More precisely, due to the length constraint, in each $G_j$ some triples will disappear completely, and at most one triple will get \emph{trimmed} (i.e. the parameter $k$ will be decreased).

  Our algorithm then computes an array $MG[1\dd n][0\dd g]$ such that $MG[j][q]$ is the minimum possible total length of gaps in a palindromic decomposition of $S[1\dd j]$, provided that there are at most $q$ gaps. Simultaneously, our algorithm computes an auxiliary array $MG'[1\dd n][0\dd g]$ such that $MG'[j][q]$ is the minimum possible total length of gaps up to position $j$ provided that this position belongs to a gap: at most the $q$-th one. 

  For $j>0$ and $q \ge 0$ we have the following formula:
  $$MG[j][q] = \min(MG'[j][q],\min_\Delta\{MG_\Delta[j][q]\})$$
  where $MG_\Delta[j][q]$ is the partial minimum computed only using generalized palindromes from $(i,\Delta,k) \in G_j$. The formula means: either we have a gap at position $j$, or we use a generalized palindrome ending at position $j$. We also set $MG[0][q]=0$ for any $q \ge 0$.

   We compute $MG_\Delta[j][q]$ for $(i,\Delta,k) \in G_j$ using the same approach as Fici et al.\ \cite{Fici:2014:SAM:2953214.2953656} used for $PL_\Delta$, ignoring the triples that disappear due to the length constraint. If there is a triple that got trimmed, then the corresponding triple at position $j-\Delta$ (from which we reuse the values in the dynamic programming) must have got trimmed as well. More precisely, if the triple $(i, \Delta, k)$ is trimmed to $(i, \Delta, k')$ at position $j$, then at position $j-\Delta$ there is a triple $(i, \Delta, k-1)$ which is trimmed to $(i, \Delta, k'-1)$; that is, by the same number of generalized palindromes. Consequently, to compute $MG_\Delta[j][q]$ from $MG_\Delta[j-\Delta][q]$, we need to include one additional generalized palindrome (the shortest one in the triple) just as in Fici et al.'s approach.

\begin{example}
Consider the string \texttt{AACCAACCAACCAACCAA}, $f=\mathrm{id}$, and let $m=7$.

  \begin{center}
  \begin{tikzpicture}[xscale=0.6]
    \foreach \x/\c in {1/A,2/A,3/C,4/C,5/A,6/A,7/C,8/C,9/A,10/A,11/C,12/C,13/A, 14/A, 15/C, 16/C, 17/A, 18/A}{
      \draw (\x,0) node[above] {\tt \c};
    }
    \foreach \x in {1,...,18} \draw (\x,0) node[below] {\scriptsize \x};
  \end{tikzpicture}
  \end{center}

Then $G_{18}=\{(1, \infty,  1), (5,4,4), (18,1,1) \}$, where:
\begin{itemize}
  \item $(1, \infty,  1)$ represents the whole string,
  \item $(5,4,4)$ represents $\{ \texttt{AACCAACCAACCAA}, \texttt{AACCAACCAA}, \texttt{AACCAA}, \texttt{AA} \}$ which will get trimmed by 2 palindromes due to the length constraint, becoming $(5,4,2)$,
  \item $(18,1,1)$ represents $\{\texttt{A}\}$ and disappears.
\end{itemize}
Now looking at position $j-\Delta=18-4=14$ for the trimmed group, we had $(5,4,3) \in G_{14}$ representing $\{ \texttt{AACCAACCAA}, \texttt{AACCAA}, \texttt{AA} \}$, and this also gets trimmed by $2$ palindromes, becoming $(5,4,1)$.

\end{example}
  
  Finally, for $j>0$ and $q>0$ we compute $MG'$ using the following formula:
  $$MG'[j][q] = \min(MG'[j-1][q],MG[j-1][q-1])+1.$$
  The first case corresponds to continuing the gap from position $j$, whereas the second to using a generalized palindrome finishing at position $j-1$ or a gap finishing at position $j-1$ (the latter will be suboptimal). Here the border cases are $MG'[j][0] = \infty$ for $j \ge 0$ and $MG'[0][q] = \infty$ for $q>0$.
  
  Thus we arrive at the complete solution to the problem.
  
\begin{theorem}
The \textsc{Generalized Palindromic Decomposition with Gaps} problem can be solved in $\Oh(n \log n \cdot g)$ time and $\Oh(n\cdot g)$ space.
\end{theorem}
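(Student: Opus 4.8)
The plan is to prove the statement in two parts: first that the recurrences given above correctly compute the arrays $MG$ and $MG'$, and second that the whole procedure respects the claimed time and space bounds. For correctness I would argue by simultaneous induction on $j$ that $MG[j][q]$ equals the minimum total gap length of a decomposition of $S[1\dd j]$ using at most $q$ gaps and that $MG'[j][q]$ equals the analogous minimum restricted to decompositions in which position $j$ lies inside a gap. The base cases are $MG[0][q]=0$ and $MG'[j][0]=MG'[0][q]=\infty$ (for $q>0$), which hold by definition.

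For the inductive step of $MG[j][q]$ I would split an optimal decomposition of $S[1\dd j]$ according to what covers position $j$: either $j$ lies inside a gap, and then the cost is exactly $MG'[j][q]$, or $j$ is the right end of a generalized palindrome $S[s\dd j]$ of length at least $m$, and then the cost is $MG[s-1][q]$ with no extra gap. Every such surviving palindrome starts at a position $s=i+t\Delta$ for some triple $(i,\Delta,k)\in G_j$ that was not removed by the length constraint, so the second case is exactly $\min_\Delta MG_\Delta[j][q]$, giving the stated formula for $MG[j][q]$. For $MG'$ I would observe that if position $j$ is in a gap then either the same gap already covers position $j-1$ (cost $MG'[j-1][q]+1$) or the gap begins at $j$, immediately after a palindrome or an earlier gap ending at $j-1$ (cost $MG[j-1][q-1]+1$); the index $q-1$ correctly charges the new gap, and taking the minimum yields the $MG'$ recurrence.

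The heart of the correctness argument is the $MG_\Delta$ recurrence. Here I would use the structural fact, recalled from Fici et al., that $(i,\Delta,k)\in G_j$ and $k\ge 2$ imply $(i,\Delta,k-1)\in G_{j-\Delta}$, so that the palindromes of a triple at position $j$ are exactly those of the corresponding triple at position $j-\Delta$ shifted by $\Delta$, together with one new shortest palindrome. Combined with the claim that the length constraint trims each $G_j$ in at most one triple and that the matching triple in $G_{j-\Delta}$ is trimmed by the same number of palindromes, this shows $MG_\Delta[j][q]=\min(MG_\Delta[j-\Delta][q],\,MG[s-1][q])$, where $s$ is the start of the shortest surviving palindrome of the triple; hence each $MG_\Delta[j][q]$ is obtained in constant time. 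For the complexity I would then count: building $G_j$ from $G_{j-1}$ and locating the single trimmed and the vanishing triples costs $\Oh(\log j)$ per position, i.e.\ $\Oh(n\log n)$ overall; the dynamic programming spends, for each $j$ and each $q\in\{0,\dots,g\}$, constant time on $MG'[j][q]$ and $\Oh(\log j)$ time on $MG[j][q]$ (a minimum over the $\Oh(\log j)$ triples of $G_j$), which sums to $\Oh(n\log n\cdot g)$. For the space, $MG$ and $MG'$ occupy $\Oh(n\cdot g)$, and the triple representation together with the retained $MG_\Delta$ values is maintained exactly as Fici et al.\ maintain $G_j$ and the $PL_\Delta$ values in $\Oh(n)$ space, augmented by the index $q$, for $\Oh(n\cdot g)$ in total.

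I expect the main obstacle to be making the $MG_\Delta$ step fully rigorous, in two respects. First, the structural claim that the length constraint removes whole triples and trims at most one, and that the trimming of a triple at $j$ is mirrored by the same-sized trimming of its predecessor at $j-\Delta$, must be proved from the arithmetic-progression combinatorics of palindromes; this is what guarantees that reusing $MG_\Delta[j-\Delta][q]$ and adding only the shortest surviving palindrome is correct. Second, the space bound hinges on retrieving $MG_\Delta[j-\Delta][q]$ in constant time without ever storing all $\Oh(n\log n)$ triple values simultaneously; I would justify this by carrying each $MG_\Delta$ value along its triple as the representation evolves, so that only the values attached to currently live triples (each a vector of length $g+1$) are kept, precisely as in Fici et al.'s $\Oh(n)$-space maintenance of $PL_\Delta$.
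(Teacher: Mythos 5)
Your proposal follows essentially the same route as the paper: the same arrays $MG$, $MG'$ and $MG_\Delta$ with identical recurrences and border cases, the same reliance on the triple representation $G_j$ and on the fact that $(i,\Delta,k)\in G_j$ implies $(i,\Delta,k-1)\in G_{j-\Delta}$ (with the matching observation that a trimmed triple at $j$ is mirrored by an equally trimmed triple at $j-\Delta$), and the same complexity accounting. The points you flag as needing care are precisely the ones the paper addresses, so this is a faithful, if somewhat more explicitly argued, reconstruction of the paper's proof.
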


\section{Computing maximal palindromes with errors}
  Recall that all maximal (standard) palindromes in a string can be computed in $\Oh(n)$ time by Manacher's~\cite{crochemore2003jewels,manacher1975new} and Gusfield's~\cite{Gusfield:1997:AST:262228} algorithms. These algorithms perform different computations for odd- and for even-length palindromes. Recall that we defined the centers of odd-length palindromes as integers and the centers of even-length palindromes as odd multiples of $\frac12$.
  
  Gusfield's algorithm \cite{Gusfield:1997:AST:262228} applies Longest Common Extension (LCE) Queries in the string $T=S \$ S^R$, where $\$ \not\in\Sigma$ is a sentinel character. An $LCE(i,j)$ query returns the length of the longest common prefix of the suffixes $T[i\dd |T|]$ and $T[j\dd |T|]$. For example, to compute the length of the maximal even-length palindrome centered between positions $i$ and $i+1$, the algorithm computes $LCE(i+1,2n+2-i)$ in $T$. Recall that LCE queries in a string (over an integer alphabet) can be answered in $\Oh(1)$ time after linear-time preprocessing~\cite{AlgorithmsOnStrings}.
  
  Gusfield's approach can be easily adapted to generalized palindromes: it suffices to apply LCE-queries on $T=S\$f(S^R)$. To further simplify the description of this approach, we introduce the Longest Gapped Palindrome (LGPal) Queries, such that $LGPal(i,j)$ is the maximum $k$ such that $f(S[i-k+1\dd i]^R) = S[j\dd j+k-1]$; see Fig.~\ref{fig:LCPal}. As we have already noticed, LGPal-queries are equivalent to LCE-queries in $T=S\$f(S^R)$.
  
  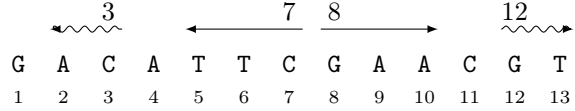
\begin{figure}
  \begin{center}
  \begin{tikzpicture}[xscale=0.6]
    \foreach \x/\c in {1/G,2/A,3/C,4/A,5/T,6/T,7/C,8/G,9/A,10/A,11/C,12/G,13/T}{
      \draw (\x,0) node[above] {\tt \c};
    }
    \foreach \x in {1,...,13} \draw (\x,0) node[below] {\scriptsize \x};
    \draw[latex-,decorate,decoration={snake,amplitude=.4mm,segment length=2mm}] (1.7,0.7) -- (3.3,0.7);
    \draw[-latex,decorate,decoration={snake,amplitude=.4mm,segment length=2mm}] (11.7,0.7) -- (13.3,0.7);
    \draw[latex-] (4.7,0.7) -- (7.3,0.7);
    \draw[-latex] (7.7,0.7) -- (10.3,0.7);
    \foreach \x in {3,7,8,12} \draw (\x,0.7) node[above] {\x};
  \end{tikzpicture}
  \end{center}
  \caption{To find the longest complemented 1-palindrome under the Hamming distance centered at position $7.5$ in $S=\mathtt{GACATTCGAACGT}$, it suffices to ask two LGPal-queries: $LGPal(7,8)=3$ finds the first mismatch, and $LGPal(3,12)$ extends the 1-palindrome after the mismatch. Note that each of these LGPal-queries is equivalent to an appropriate LCP-query in $S\$f(S^R)$.
  \label{fig:LCPal}}
  \end{figure}

  It is known (see~\cite{Gusfield:1997:AST:262228}) that all maximal generalized $\delta$-palindromes under the Hamming distance can be computed in $O(n \cdot \delta)$ time via at most $\delta$ applications of the LGPal-query for each possible center position. Below we show how to compute maximal generalized $\delta$-palindromes under the edit distance within the same time complexity.
  
  Recall that if $u$ is a generalized $\delta$-palindrome under the edit distance, then there exists a generalized palindrome $v$ such that the minimal number of edit operations (insertion, deletion, substitution) required to transform $u$ to $v$ is at most $\delta$. The following simple observation shows that we can restrict ourselves to deletions and substitutions only, which we call in what follows the \emph{restricted edit operations}. Intuitively, instead of inserting at position $i$ a character to match the character at position $|u|-i+1$, we can delete the character at position $|u|-i+1$.
  
  \begin{observation}
    Let $u$ be a generalized $\delta$-palindrome and $v$ a generalized palindrome such that the edit distance between $u$ and $v$ is minimal. Then there exists a generalized palindrome $v'$ such that the number of restricted edit operations needed to transform $u$ to $v'$ is equal to the edit distance between $u$ and $v$.
  \end{observation}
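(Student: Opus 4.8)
The plan is to split the statement into a trivial lower bound and a constructive upper bound. Write $D$ for the edit distance between $u$ and $v$, which by hypothesis equals the minimum edit distance from $u$ to any generalized palindrome. Every restricted transformation is in particular an unrestricted edit transformation, so for \emph{any} generalized palindrome $w$ the number of restricted operations needed to turn $u$ into $w$ is at least the ordinary edit distance from $u$ to $w$, which is in turn at least $D$. Hence the whole difficulty is to exhibit one generalized palindrome $v'$ that can be produced from $u$ using at most $D$ restricted operations; the two inequalities then force equality, and the minimum number of restricted operations for this $v'$ is exactly $D$.

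To build such a $v'$ I would start from an optimal alignment realizing the edit distance $D$ between $u$ and $v$, viewed as a monotone sequence of columns of three kinds: matches/substitutions (a letter of $u$ paired with a letter of $v$), deletions (a letter of $u$ paired with a gap), and insertions (a gap paired with a letter of $v$). I then eliminate insertions one at a time. Fix an insertion column producing the letter $v[i]$ and let $i'=|v|-i+1$ be its mirror position, so that $v[i]=f(v[i'])$ by the palindrome condition $v=f(v^R)$. The letter $v[i']$ is produced either by a match/substitution against some letter $a$ of $u$, or by another insertion (or $i=i'$, the centre of an odd-length palindrome). In the first case I delete the symmetric pair $v[i],v[i']$ from the target, dropping the insertion column and turning the column of $v[i']$ into a deletion of $a$; in the remaining cases I simply drop the one or two offending insertion columns. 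Deleting a mirror pair from a generalized palindrome again yields a generalized palindrome, so the new target $v'$ remains admissible; this is exactly the swap suggested by the intuition, namely deleting the letter that the inserted letter was meant to pair with instead of inserting.

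The crux, and the step I expect to be the main obstacle, is the bookkeeping that accompanies each surgery: one must check that the alignment stays monotone, that the target stays a generalized palindrome, and—most importantly—that the number of insertions strictly decreases while the total cost does not increase. A short accounting over the affected columns settles this: replacing an insertion of $v[i]$ together with a match on $v[i']$ by a single deletion of $a$ leaves the cost unchanged, and if $v[i']$ was a substitution or was itself an insertion the cost strictly drops. Iterating until no insertions remain produces a transformation of $u$ into some generalized palindrome $v'$ using only deletions and substitutions, of total cost at most $D$. Care is needed for the boundary cases ($i=i'$, the empty string, and—for a fixed-point-free involution such as the DNA complement—the fact that no odd-length generalized palindrome exists, so that centre case never occurs), but each is immediate once listed.

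Combining the constructed bound $\le D$ with the universal lower bound $\ge D$ from the first paragraph yields the claim. As an alternative formulation of the same argument, one may instead invoke the standard two-pointer dynamic program for the distance to the nearest (generalized) palindrome: the move that inserts a mirror letter at the boundary leads to the \emph{same} subproblem with the \emph{same} unit cost as the move that deletes the opposite boundary letter, so insertions are always redundant and an optimum using only restricted operations exists. I would present the alignment-surgery version as the main proof, since it directly matches the stated intuition, and mention the dynamic-programming view only as a remark.
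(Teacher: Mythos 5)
Your proof is correct and formalizes exactly the idea the paper uses (the paper states this only as an observation, with the one-line intuition that an insertion meant to match the character at the mirror position can be replaced by deleting that mirror character). Your alignment-surgery argument, together with the lower bound from the minimality of the edit distance over all generalized palindromes, is a sound and complete write-up of that same approach.
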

  
  We can extend a maximal generalized $\delta$-palindrome $S[i\dd j]$ to a maximal generalized $(\delta+1)$-palindrome in three ways; either ignore the letter $S[i-1]$ and then perform an LGPal-query, or ignore the letter $S[j+1]$ and then perform an LGPal-query, or ignore both and then perform the LGPal-query. More formally:
  
  \begin{definition}
  Assume that $S[i\dd j]$ is a generalized $\delta$-palindrome.
  Then we say that each of the factors $S[i'\dd j']$ for:
  \begin{itemize}
    \item $i'=i-1-d$, $j'=j+d$, where $d=LGPal(i-2,j+1)$
    \item $i'=i-d$, $j'=j+1+d$, where $d=LGPal(i-1,j+2)$
    \item $i'=i-1-d$, $j'=j+1+d$, where $d=LGPal(i-2,j+2)$
  \end{itemize}
  is an \emph{extension} of $S[i\dd j]$.
  If the index $i'$ is smaller than 1 or the index $j'$ is greater than $|S|$,
  the corresponding extension is not possible.
  We also say that $S[i\dd j]$ can be extended to any of the three strings $S[i'\dd j']$.
  \end{definition}
  Clearly, the extensions of a generalized $\delta$-palindrome are always generalized $(\delta+1)$-palindromes.
  
  To facilitate the case of $\delta$-palindromes being prefixes or suffixes of the text, we also introduce the following \emph{border-reductions} for $S[i\dd j]$ being a generalized $\delta$-palindrome:
  \begin{itemize}
    \item If $i=1$, a border reduction leads to $S[1\dd j-1]$.
    \item If $j=n$, a border reduction leads to $S[i+1\dd n]$.
  \end{itemize}
  If any of the reductions is possible, we also say that $S[i \dd j]$ can be border-reduced to the corresponding strings. As previously, border-reductions of a generalized $\delta$-palindrome are always generalized $(\delta+1)$-palindromes.
  
  \begin{lemma}\label{lem:ext}
    Given a maximal generalized $\delta$-palindrome $S[i' \dd j']$ with $\delta>0$, there exists a maximal generalized $(\delta-1)$-palindrome $S[i \dd j]$ which can be extended or border-reduced to $S[i' \dd j']$.
  \end{lemma}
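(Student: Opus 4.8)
The plan is to show that every maximal generalized $\delta$-palindrome $P=S[i'\dd j']$ (centered at $c=\tfrac{i'+j'}{2}$) can be peeled down to a maximal generalized $(\delta-1)$-palindrome from which it is recovered by exactly one of the three extensions or a border-reduction. Throughout I would work with the restricted edit operations (deletions and substitutions only), which is legitimate by the preceding observation; write $e(\cdot)$ for the minimal number of such operations needed to reach a generalized palindrome. The first step is to pin down the error budget: a maximal $\delta$-palindrome that abuts neither end of $S$ satisfies $e(P)=\delta$ exactly. Indeed, if $e(P)\le\delta-1$, then appending the symmetric pair $(S[i'-1],S[j'+1])$ and, when it mismatches, charging one substitution, would produce a strictly longer $\delta$-palindrome with the same center $c$, contradicting maximality. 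This tightness is what forces the peeling to drop the error count by exactly one.

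Next I would fix an optimal restricted-edit alignment of $P$ and look at the \emph{outermost} matched run, i.e.\ the maximal block of clean matches obtained by scanning from the two ends $i',j'$ inwards, together with the first error lying beneath it. Removing this run and that single error leaves an inner factor $R=S[i\dd j]$ that is a generalized $(\delta-1)$-palindrome whose center is $c$, $c+\tfrac{1}{2}$, or $c-\tfrac{1}{2}$ according to whether the peeled error is a substitution, a left-deletion, or a right-deletion. Because $P$ is maximal, the outermost run cannot be lengthened — otherwise $S[i'-1]$ and $S[j'+1]$ would already match under $f$ and extend $P$ for free — so the run has precisely the length returned by the corresponding LGPal-query. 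Comparing with the definition of \emph{extension}, this says exactly that $P$ is the matching extension of $R$.

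The heart of the argument is to upgrade $R$ to a \emph{maximal} $(\delta-1)$-palindrome, and I would argue that the $R$ just produced is already maximal at its center. Suppose not, and let $M$ be a strictly longer $(\delta-1)$-palindrome with the same center. Then $M$ must reach past $R$ and hence contain the peeled pair; since that pair is a genuine error while everything outside it (the outermost run of $P$) is matched cleanly, $M$ can be aligned so as to spend one error on the peeled pair and none beyond it, leaving at most $\delta-2$ errors to cover $R$'s region. That forces $e(R)\le\delta-2$ and therefore $e(P)\le\delta-1$, contradicting the tightness established above; hence $R$ is maximal. The delicate point, and the step I expect to be the main obstacle, is controlling $M$'s alignment as it crosses the peeled pair when deletions are present: the matched pairs are then no longer the naive symmetric pairs $(c-s,c+s)$, so one must invoke the observation to normalise the operations and run a short case analysis over the three error types to guarantee that $M$ really does match $P$'s outer run without spending extra errors.

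Finally I would treat the case where $P$ abuts an end of $S$, say $i'=1$, for which the symmetric extension used in the tightness argument is unavailable and $e(P)$ may drop below $\delta$. Here the outermost operation of $P$'s alignment is a boundary deletion rather than an interior error, and the same peeling now produces a $(\delta-1)$-palindrome that reaches one position further into the interior and \emph{border-reduces} to $P$; the identical exchange argument, driven by maximality of $P$, shows this source may again be taken maximal. Assembling the extension case with the two boundary sub-cases yields a maximal generalized $(\delta-1)$-palindrome that extends or border-reduces to $P$, which is the assertion of the lemma.
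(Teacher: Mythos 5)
Your overall strategy is the same as the paper's: take a shortest sequence of restricted edit operations for $P=S[i'\dd j']$, peel off the outermost operation to obtain an inner generalized $(\delta-1)$-palindrome $R$, and use maximality of $P$ to argue that the outer matched run has exactly the length of the relevant LGPal-query, so that $P$ is an extension of $R$. The genuine gap is your central claim that this $R$ is \emph{already maximal}. That claim is false in the deletion case, and the paper's proof is structured precisely around this failure: when the outermost operation is a deletion at position $e$, the inner factor $R=S[e+1\dd b]$ (with $b=j'-(e-i')$) need not be maximal, because a longer generalized $(\delta-1)$-palindrome $M$ with the same (half-position-shifted) center can extend past $R$ \emph{at no extra cost}, for instance by cleanly matching the deleted character $S[e]$ against $f(S[b+1])$. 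Your exchange argument assumes that any such $M$ must ``spend one error on the peeled pair'' and otherwise respect $P$'s alignment; nothing forces this, since $M$ is centered half a position away from $P$ and its optimal alignment may pair characters entirely differently. Hence the inference $e(R)\le\delta-2$, and the ensuing contradiction with your tightness claim, do not follow.

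The configuration you dismiss as impossible is exactly the one the paper must (and does) handle: it replaces $R$ by the maximal $(\delta-1)$-palindrome $S[i\dd j]$ with the same center and distinguishes three sub-cases --- if $j\le j'$ then $P$ is still a type-1 extension of $S[i\dd j]$; if $j>j'$ and $i'>1$ one derives a contradiction with the maximality of $P$; and if $j>j'$ and $i'=1$ then $S[i\dd j]=S[1\dd j'+1]$ and $P$ arises by a border-reduction. Your treatment of the boundary case also misses this mechanism: border-reduction is needed not because the outermost operation is a ``boundary deletion,'' but because the maximal inner palindrome overshoots $j'$ on the right while being pinned at the left end of the string. (Your tightness observation that $e(P)=\delta$ for interior maximal $\delta$-palindromes is correct, and your substitution case is essentially the paper's; it is the deletion case that requires the explicit case analysis you tried to avoid.)
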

  \begin{proof}
    Consider a shortest sequence of restricted edit operations that transforms $u=S[i'\dd j']$ into a generalized palindrome $v$. Let us consider the position where we perform a restricted edit operation that is closest to $i'$ or $j'$. Assume w.l.o.g.\ that this position---denote it by $e$---is not further to $i'$ than to $j'$.
    
    Assume first that this edit operation is a substitution. Then $S[i\dd j]$, for $i=e+1$ and $j=j'-(e+1-i')$, is a generalized $(\delta-1)$-palindrome (the witness generalized palindrome is the corresponding factor of $v$); see Fig.~\ref{sub}. Moreover, it is a maximal generalized $(\delta-1)$-palindrome, as otherwise $S[e]=S[i-1]$ would be equal to $f(S[j+1])$, which means that the substitution at the position $e$ would not be necessary. This completes the proof in this case.

  \begin{figure}
  \begin{center}
  \begin{tikzpicture}[xscale=0.6]
    \foreach \x/\c in {1/i',2/,3/,4/e,5/i,6/,7/,8/,9/j,10/,11/,12/,13/j'}{
      \draw (\x,0) node[above] {$\c$};
    }
    \draw[latex-,decorate,decoration={snake,amplitude=.4mm,segment length=2mm}] (0.8,0.7) -- (3.3,0.7);
    \draw[-latex,decorate,decoration={snake,amplitude=.4mm,segment length=2mm}] (10.5,0.7) -- (13,0.7);
    \draw[latex-] (4.7,0.7) -- (7.7,0.7);
    \draw[-latex] (7.7,0.7) -- (9.1,0.7);
    \draw (4,0.45) node[above] {$X$};
    \draw (9.8,0.45) node[above] {$Y$};
  \end{tikzpicture}
  \end{center}
  \caption{If the outermost restricted edit operation on $S[i' \dd j']$ is a substitution (from letter $X$ to letter $Y$), then $S[i' \dd j']$ is an extension of the third type of the maximal generalized $(\delta-1)$-palindrome $S[i\dd j]$.}
  \label{sub}
  \end{figure}
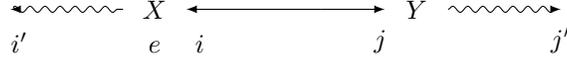
    
    Now assume that the edit operation at the position $e$ was a deletion. Let $a=e+1$ and $b=j'-(e-i')$. Again, we see that clearly $S[a\dd b]$ is a generalized $(\delta-1)$-palindrome. If it is maximal, then we are done. Otherwise, consider the maximal generalized $(\delta-1)$-palindrome $S[i\dd j]$ centered at the same position as $S[a\dd b]$ ($a-i = j-b >0$). Now we have three cases; see Fig.~\ref{del}.  
    
  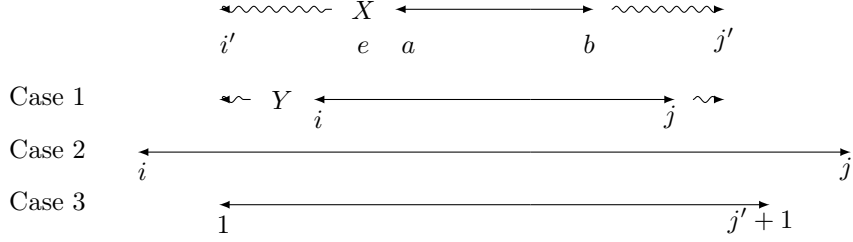
\begin{figure}
  \begin{center}
  \begin{tikzpicture}[xscale=0.6]
    \foreach \x/\c in {1/i',2/,3/,4/e,5/a,6/,7/,8/,9/b,10/,11/,12/j'}{
      \draw (\x,0) node[above] {$\c$};
    }
    \draw[latex-,decorate,decoration={snake,amplitude=.4mm,segment length=2mm}] (0.8,0.7) -- (3.3,0.7);
    \draw[-latex,decorate,decoration={snake,amplitude=.4mm,segment length=2mm}] (9.5,0.7) -- (12,0.7);
    \draw[latex-] (4.7,0.7) -- (7.7,0.7);
    \draw[-latex] (7.7,0.7) -- (9.1,0.7);
    \draw (4,0.45) node[above] {$X$};
    
    \draw (-3,-0.7) node[above] {Case 1};
    \draw[latex-] (2.9,-0.5) -- (7.7,-0.5);
    \draw[-latex] (7.7,-0.5) -- (10.9,-0.5);
    \draw (3,-1) node[above] {$i$};
    \draw (10.8,-1) node[above] {$j$};    
        \draw (2.2,-0.75) node[above] {$Y$};
    \draw[latex-,decorate,decoration={snake,amplitude=.4mm,segment length=2mm}] (0.8,-0.5) -- (1.5,-0.5);
    \draw[-latex,decorate,decoration={snake,amplitude=.4mm,segment length=2mm}] (11.3,-0.5) -- (12,-0.5);
    
    \draw (-3,-1.4) node[above] {Case 2};
    \draw[latex-] (-1,-1.2) -- (7.7,-1.2);
    \draw[-latex] (7.7,-1.2) -- (14.8,-1.2);
    \draw (-0.9,-1.7) node[above] {$i$};
    \draw (14.7,-1.7) node[above] {$j$};    
    
    \draw (-3,-2.1) node[above] {Case 3};
    \draw[latex-] (0.8,-1.9) -- (7.7,-1.9);
    \draw[-latex] (7.7,-1.9) -- (13,-1.9);
    \draw (0.9,-2.4) node[above] {1};
    \draw (12.8,-2.4) node[above] {$j'+1$};
  \end{tikzpicture}
  \end{center}
  \caption{Three cases resulting when the outermost edit operation on $S[i'\dd j']$ is a deletion of a character $X$.}
  \label{del}
  \end{figure}

    \begin{enumerate}
    \item If $j \le j'$, then we can obtain $S[i'\dd j']$ by an extension (of the first type) of $S[i\dd j]$; i.e.\ ignoring the letter $S[i-1]$.
    \item If $j > j'$, then we have that $S[i'\dd j'+1]$ is a generalized $(\delta-1)$-palindrome. If, additionally, $i' > 1$, then $S[i'-1\dd j'+1]$ is a generalized $\delta$-palindrome, which contradicts the maximality of $S[i'\dd j']$.
    \item Finally, if $j > j'$ and $i'=1$, then $i=1$, $j=j'+1$. Hence, $S[i'\dd j']$ obtained from $S[i\dd j]$ by a border-reduction.
    \end{enumerate}
    This completes the proof of the lemma.
  \end{proof}
  
  The combinatorial characterization of Lemma~\ref{lem:ext} yields an algorithm for generating all maximal generalized $d$-palindromes, for all centers and subsequent $d=0,\ldots,\delta$.
  Maximal generalized 0-palindromes are computed using Gusfield's approach (LGPal-queries). For a given $d < \delta$, we consider all the maximal generalized $d$-palindromes and try to extend each of them in all three possible ways (and border-reduce, if possible). This way we obtain a number of generalized $(d+1)$-palindromes amongst which, by Lemma~\ref{lem:ext}, are all maximal generalized $(d+1)$-palindromes. To exclude the non-maximal ones, we group the generalized $(d+1)$-palindromes by their centers (in $\Oh(n)$ time via bucket sort) and retain only the longest one for each center.
We arrive at the following intermediate result.
  
  \begin{lemma}\label{lem:edit}
    Under the edit distance, all maximal generalized $\delta$-palindromes in a string of length $n$ can be computed in $\Oh(n \cdot \delta)$ time and $\Oh(n)$ space.
  \end{lemma}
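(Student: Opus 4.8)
The plan is to turn the combinatorial characterization of \cref{lem:ext} into an iterative ``generate-and-filter'' procedure that computes, for each error budget $d = 0, 1, \ldots, \delta$ in turn, the list of all maximal generalized $d$-palindromes, one per center. First I would build the LCE data structure on $T = S\$f(S^R)$ in $\Oh(n)$ time and space, so that every LGPal-query costs $\Oh(1)$. The base case $d=0$ is handled by Gusfield's approach (a single LGPal-query per center), yielding all maximal generalized $0$-palindromes in $\Oh(n)$ time, at most one per center.

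For the inductive step, suppose I have the list $L_d$ of all maximal generalized $d$-palindromes; since there is at most one per center and there are $\Oh(n)$ possible centers, $|L_d| = \Oh(n)$. For each $S[i\dd j] \in L_d$ I generate its (at most three) extensions and (at most two) border-reductions; by definition each of these is obtained with a single LGPal-query, so this costs $\Oh(1)$ per element and $\Oh(n)$ in total. All the resulting strings are generalized $(d+1)$-palindromes, and by \cref{lem:ext} every maximal generalized $(d+1)$-palindrome occurs among them. To discard the non-maximal candidates I bucket-sort the candidates by their center and, for each center, keep only the longest candidate. Since the maximal $(d+1)$-palindrome at a center is itself among the candidates and is by definition the longest generalized $(d+1)$-palindrome there, the retained string at each center is exactly the maximal generalized $(d+1)$-palindrome; this produces $L_{d+1}$ in $\Oh(n)$ time.

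Iterating this step $\delta$ times gives $\Oh(n \cdot \delta)$ total time. For the space bound I would observe that computing $L_{d+1}$ needs only $L_d$, so I keep a single level at a time; together with the $\Oh(n)$ LCE structure this gives $\Oh(n)$ working space, while the output $L_\delta$ has size $\Oh(n)$ as well.

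Correctness is essentially handed to us by \cref{lem:ext}, so I expect the only delicate point to be the filtering argument: I must check that keeping the longest candidate per center recovers \emph{exactly} the maximal $(d+1)$-palindromes — that no maximal one is discarded (guaranteed because it appears among the candidates) and that nothing strictly longer slips in (guaranteed because every candidate is itself a genuine generalized $(d+1)$-palindrome, so none can exceed the maximal one). The remaining care is purely bookkeeping for out-of-range indices, which the definitions of extension and border-reduction already handle by declaring the corresponding operation impossible.
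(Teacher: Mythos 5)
Your proposal is correct and follows essentially the same route as the paper: Gusfield-style LGPal-queries (i.e., LCE-queries on $S\$f(S^R)$) for the base case $d=0$, then $\delta$ rounds of generating the three extensions plus border-reductions of each maximal $d$-palindrome and filtering by bucket-sorting on centers and keeping the longest candidate per center, with correctness supplied by \cref{lem:ext}. Your extra observations---that only one level $L_d$ needs to be kept in memory and that the filtering is sound because every candidate is a genuine $(d+1)$-palindrome---are exactly the (implicit) justifications behind the paper's stated $\Oh(n\cdot\delta)$ time and $\Oh(n)$ space bounds.
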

    
  \section{Maximal palindromic decomposition with gaps and errors}
  Let $\F$ be a set of factors of the text $S[1\dd n]$. In this section we develop a general framework that allows to decompose $S$ into factors from $\F$, allowing at most $g$ gaps. We call such a factorization a $(g,\F)$-factorization of $S$.
  Our goal is to find a $(g,\F)$-factorization of $S$ that minimizes the total length of gaps. We aim at the time complexity $\Oh((n+|\F|) \cdot g)$ and space complexity $\Oh(n\cdot g+|\F|)$.
  
  In our solution we use dynamic programming to compute two arrays, similar to the ones used in Section~\ref{sec:first}:
  \begin{description}
    \item{$MG[1\dd n][0\dd g]$:} $MG[j][q]$ is the minimum total length of gaps in a $(q,\F)$-factorization of $S[1\dd j]$.
    \item{$MG'[1\dd n][0\dd g]$:} $MG'[j][q]$ is the minimum total length of gaps in a $(q,\F)$-factorization of $S[1\dd j]$ for which the position $j$ belongs to a gap.
  \end{description}
  
  We use the following formulas, for $j>0$ and $q>0$:
  \begin{align*}
    MG[j][q]  &= \min(MG'[j][q],\min_{S[a\dd j] \in \F} MG[a-1][q]) \\
    MG'[j][q] &= \min(MG[j-1][q-1],MG'[j-1][q])+1
  \end{align*}
  The border cases are exactly the same as in Section~\ref{sec:first}.
  
  Clearly, the space complexity of this solution is $\Oh(n\cdot g+|\F|)$. Let us analyze its time complexity. Fix $q \in \{0,\ldots,g\}$. The number of transitions using the factors from $\F$ in the dynamic programming is $|\F|$ in total, as each factor is used only for the position $j$ where it ends. Hence, the formulas for $MG[j][q]$ take $\Oh(n\cdot g+|\F|\cdot g)$ time to evaluate. Computing the $MG'[j][q]$ values takes $\Oh(n\cdot g)$ time. Thus we arrive at the desired time complexity of $\Oh((n+|\F|)\cdot g)$.
  
  We apply this approach to maximal generalized $\delta$-palindromes in each of the considered metrics (see the classic result from \cite{Gusfield:1997:AST:262228} for the Hamming distance and Lemma~\ref{lem:edit} for the edit distance) to obtain the following result.
  
  \begin{theorem}\label{thm:main2}
    The \textsc{Generalized Maximal $\delta$-Palindromic Decomposition with Gaps} problem under the Hamming distance or the edit distance can be solved in $\Oh(n\cdot (g+	\delta))$ time and $\Oh(n\cdot g)$ space.
  \end{theorem}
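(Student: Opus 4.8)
The plan is to instantiate the general $(g,\F)$-factorization framework developed immediately above, choosing $\F$ to be exactly the set of factors that the problem is allowed to use as blocks. Concretely, I would let $\F$ consist of all \emph{maximal} generalized $\delta$-palindromes (under whichever of the two metrics is in question) whose length is at least $m$. With this choice, a $(g,\F)$-factorization of $S$ minimizing the total gap length is precisely a solution to the \textsc{Generalized Maximal $\delta$-Palindromic Decomposition with Gaps} problem: every non-gap block is a maximal generalized $\delta$-palindrome of length $\ge m$, and at most $g$ gaps are used. Hence correctness of the dynamic programming is inherited directly from the framework, and it remains only to bound the running time and space of the combined procedure.

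First I would compute $\F$. Under the Hamming distance this uses Gusfield's approach~\cite{Gusfield:1997:AST:262228}, and under the edit distance it uses \cref{lem:edit}; both produce all maximal generalized $\delta$-palindromes in $\Oh(n\cdot\delta)$ time and $\Oh(n)$ space. A single linear scan then discards those of length less than $m$. The key quantitative observation is that a maximal generalized $\delta$-palindrome is uniquely determined by its center, and there are only $\Oh(n)$ possible centers (the integers and the odd multiples of $\tfrac12$ in the relevant range); therefore $|\F|=\Oh(n)$. This is the point on which the whole bound hinges, so I would state it explicitly.

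Second, I would feed $\F$ into the framework. By its analysis the dynamic programming runs in $\Oh((n+|\F|)\cdot g)$ time and $\Oh(n\cdot g+|\F|)$ space; substituting $|\F|=\Oh(n)$ collapses these to $\Oh(n\cdot g)$ time and $\Oh(n\cdot g)$ space. Note that several blocks of $\F$ may share the same right endpoint $j$, but this causes no difficulty: each factor contributes exactly one transition, namely at the position where it ends, so the total number of block-transitions is $|\F|=\Oh(n)$ per value of $q$, as already accounted for in the framework's time analysis.

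Finally I would add up the two phases: the preprocessing contributes $\Oh(n\cdot\delta)$ time and $\Oh(n)$ space, and the dynamic programming contributes $\Oh(n\cdot g)$ time and $\Oh(n\cdot g)$ space, for a total of $\Oh(n\cdot(g+\delta))$ time and $\Oh(n\cdot g)$ space, as claimed. I do not expect a genuine obstacle here, since the theorem is essentially an application of two previously established components; the only thing that must be verified with care is the bound $|\F|=\Oh(n)$, which is exactly what lets the $|\F|\cdot g$ term of the framework be absorbed into $n\cdot g$ rather than inflating the complexity.
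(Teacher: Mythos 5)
Your proposal matches the paper's proof: the paper likewise instantiates the $(g,\F)$-factorization framework with $\F$ being the maximal generalized $\delta$-palindromes (computed via Gusfield's approach for Hamming and Lemma~\ref{lem:edit} for edit distance), relying on the fact that there is at most one maximal $\delta$-palindrome per center so that $|\F|=\Oh(n)$. Your write-up is correct and in fact spells out the $|\F|=\Oh(n)$ bound and the length-$m$ filtering more explicitly than the paper does.
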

  
\begin{example}
  Consider the following string\footnote{See \texttt{http://www.cesshiv1.org/disview.php?accession=AB220944}} of length 92:

\smallskip
\noindent
\texttt{GGACTCGGCTTGCTGAGGTGCACACAGCAAGAGGCGAGAGCGGCGACTGGTGAGTACGCCAAATTTTGACTAGCGGAGGCTAGAAGGAGAGA}

\smallskip
\noindent
We have used our implementation of the algorithm from Theorem~\ref{thm:main2} to compute the decomposition of the string into maximal complemented 3-palindromes of length at least 14 under the \textbf{edit distance} with at most 4 gaps ($g=4$, $\delta=3$, $m=14$) with the minimal total gap length:

\medskip
\noindent
\texttt{[GGACTCG]\,\underline{G}CTTGCTG\mbox{\underline{\em A}}\underline{G}GTGCAC\underline{A}CAGCAAG\underline{A}\,[GGCGAGAGC]\,GGCGACT\underline{G}\underline{G}\underline{T}\underline{G}AGT\mbox{\underline{\em A}}CGCC\,[AAATTTTG]}

\noindent
\texttt{\underline{A}CTAGC\underline{G}\underline{G}\underline{A}\underline{G}GCTAG\underline{A}\,[AGGAGAGA]
}

\medskip
The gaps are given in square brackets. Edit operations are underlined, with deletes additionally given in italics. The gaps have total length 32.

In comparison, the optimal decomposition of this string under the \textbf{Hamming distance} with the same parameters ($g=4$, $\delta=3$, $m=14$) uses four gaps of total length 46.
\end{example}

  \section{Conclusions}
  We have presented two algorithms for finding palindromic decompositions: one allowing gaps and the other allowing both gaps in the decomposition and errors in palindromes. The first algorithm shows that (somewhat surprisingly) Fici et al.'s algorithm \cite{Fici:2014:SAM:2953214.2953656} for finding an exact palindromic factorization can be extended to handle gaps, a constraint on the palindromes length, and complements in palindromes as well. In the second algorithm we decompose a string into maximal palindromes with errors; the most involved part here was computing all such maximal palindromes under the edit distance.
  
  In the problems that were defined in the beginning, the objective was to minimize the total length of gaps, allowing a certain number of gaps. However, the approaches that were presented in this paper can be used to solve different variants of the problems, like minimizing only the total number of gaps or maximizing the total length of palindromes, regardless of the number of gaps. 
  
  An open question is to efficiently compute decompositions into palindromes that may contain errors and are not necessarily maximal. This problem seems to be hard, as $\delta$-palindromes do not have such a strong combinatorial structure as palindromes without errors.
  
\section*{Acknowledgments}

We would like to warmly thank G.\ Fici, T.\ Gagie, J.\ Kärkkäinen, and D.\ Kempa for providing us Figures~\ref{fig:lemm-gap-properties}, \ref{fig:lemm-gap-properties2}, and \ref{fig:final_lemma} (that we adapted and present in Appendix~\ref{app}) that they first presented in~\cite{Fici:2014:SAM:2953214.2953656}.

\bibliographystyle{plain}
\bibliography{references}

\appendix

\section{Generalized palindromic factorization}\label{app}

In this section we show that the approach of Fici et al.\ \cite{Fici:2014:SAM:2953214.2953656} works for generalized palindromes for any involution $f$. The following auxiliary lemma extends the combinatorial properties of standard palindromes used in \cite{Fici:2014:SAM:2953214.2953656} (see Lemmas 1-3 therein) to generalized palindromes. Recall that a string $y$ is called a \emph{border} of a string $x$ if it is both a prefix and a suffix of $x$. A number $p$ is called a \emph{period} of $x$ if $x[i]=x[i+p]$ for all $i=1,\ldots,|x|-p$. It is well known that $x$ has a period $p$ iff it has a border of length $|x|-p$; see~\cite{AlgorithmsOnStrings,crochemore2003jewels}.

\begin{lemma}\label{lem:aux}
\begin{enumerate}[(a)]
  \item Let $y$ be a suffix of a generalized palindrome $x$. Then $y$ is a border of $x$ iff $y$ is a generalized palindrome.
  \item Let $x$ be a string with a border $y$ such that $| x | \le 2 | y |$. Then $x$ is a generalized palindrome iff $y$ is a generalized palindrome.
  \item Let $y$ be a proper suffix of a generalized palindrome $x$. Then $| x | - | y |$ is a period of $x$ iff $y$ is a generalized  palindrome. In particular, $| x | - | y |$ is the
smallest period of $x$ iff $y$ is the longest generalized palindromic proper suffix of $x$.
\end{enumerate}
\end{lemma}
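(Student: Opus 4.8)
The plan is to prove (a) first by a direct index computation and then derive (b) and (c) from it, mirroring the classical argument (Lemmas 1--3 of Fici et al.) but keeping careful track of where the involution $f$ is applied. Throughout I write $n=|x|$ and $k=|y|$ and use the two defining relations: $x$ being a generalized palindrome means $x[i]=f(x[n+1-i])$ for all $i$, and $y$ being a generalized palindrome means $y[i]=f(y[k+1-i])$ for all $i$. The only extra property of $f$ I need is that it is an involution, i.e.\ $f(f(c))=c$; this is precisely what makes the computations go through as in the ordinary palindrome case.

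For (a), suppose first that the suffix $y$ is a border, so that $y$ also equals the prefix $x[1\dd k]$. Then for $i\le k$ I have $x[i]=y[i]$ and $x[n+1-i]=y[k+1-i]$, and substituting into $x[i]=f(x[n+1-i])$ gives $y[i]=f(y[k+1-i])$, which is exactly the statement that $y$ is a generalized palindrome. Conversely, if $y$ is a generalized palindrome, the same two substitutions combined with $f(f(\cdot))=\mathrm{id}$ yield $x[i]=f(x[n+1-i])=f(y[k+1-i])=y[i]$ for $i\le k$, so the prefix of length $k$ equals $y$ and hence $y$ is a border. Thus the whole of (a) is a short index chase once the prefix/suffix dictionary is set up; this is the step I expect to carry the essential content of the lemma, and the only real care needed is keeping $f$ on the correct side.

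For (b), the forward implication is immediate from (a): if $x$ is a generalized palindrome and $y$ is a border (hence a suffix) of $x$, then $y$ is a generalized palindrome, and the hypothesis $n\le 2k$ is not even needed here. For the converse I use the length bound: because $n\le 2k$, the prefix occurrence of $y$ (positions $1,\dots,k$) and its suffix occurrence (positions $n-k+1,\dots,n$) together cover every position of $x$. Fixing $i\le k$, the computation from (a) gives $x[i]=f(x[n+1-i])$ verbatim (using that $y$ is a generalized palindrome); for $i\ge n+1-k$ I apply this to $i'=n+1-i\le k$ and use the involution to flip the equality. Hence $x[i]=f(x[n+1-i])$ for all $i$, so $x$ is a generalized palindrome. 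The one point to check carefully is that the two ranges really cover $\{1,\dots,n\}$, which is exactly the inequality $k\ge n-k$, i.e.\ the given length condition.

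Finally, (c) reduces to (a) through the classical border--period duality recalled before the lemma: $|x|-|y|$ is a period of $x$ iff $x$ has a border of length $|y|$, and since $y$ is the unique suffix of that length, this is the same as saying $y$ is a border of $x$. By (a) this is equivalent to $y$ being a generalized palindrome, proving the first assertion. For the ``in particular'' part, the smallest period of $x$ corresponds to the longest proper border; by (a) the proper borders of $x$ are precisely the proper suffixes that are generalized palindromes, so the longest proper border is the longest generalized-palindromic proper suffix, and the claim follows. No new difficulty arises beyond invoking (a).
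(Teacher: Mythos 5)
Your proof is correct and follows essentially the same route as the paper's: part (a) is the content-bearing step (the paper phrases it as the string identity $y'=f(y^R)$ for the length-$|y|$ prefix $y'$, which is exactly your index chase), part (b)'s converse is the same covering argument for $|x|\le 2|y|$, and part (c) is the same reduction to (a) via the border--period duality. No issues.
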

\begin{proof}
(a) Let $y'$ be the prefix of $x$ of length $|y|$. As $x$ is a generalized palindrome, $y'=f(y^R)$. $(\Rightarrow)$ If $y$ is a border of $x$, then $y = y' = f(y^R)$, so $y$ is a generalized palindrome. $(\Leftarrow)$ If $y$ is a generalized palindrome, then $y' = f(y^R) = y$, so $y$ is a border of $x$.

(b) $(\Rightarrow)$ From (a), if $x$ is a generalized palindrome and $y$ is its border, then $y$ is a generalized palindrome. $(\Leftarrow)$ If $y$ is a generalized palindrome, $f(x^R)$ has a border $f(y^R)=y$. This border covers the whole string $f(x^R)$ and is the same as the border of $x$, so $x=f(x^R)$ and $x$ indeed is a generalized palindrome.

(c) This is a consequence of part (a) and the relation between borders and periods of a string.
\end{proof}

=The crucial combinatorial property of standard palindromes used in Step 1 of the algorithm in Section~\ref{sec:first} is that the sequence of consecutive differences in $P_j$ is non-increasing and contains at most $\Oh(\log j)$ distinct values. We show that the same observation holds for generalized palindromes; this follows from the next lemma, parts (1) and (2). The proof of Lemma~\ref{lem:step1} follows exactly the lines of the proof of the corresponding Lemma~4 in \cite{Fici:2014:SAM:2953214.2953656}; Figures~\ref{fig:lemm-gap-properties} and~\ref{fig:lemm-gap-properties2} (thanks to the courtesy of the authors of~\cite{Fici:2014:SAM:2953214.2953656}) are included for illustration.

\begin{figure}[htpb]
  \centering\small
  \begin{tikzpicture}[scale=0.40]
    \draw[dashed, fill=gray!10] (11,-2) rectangle (14,-1);
    \draw[dashed, fill=gray!10] (0,-1) rectangle (11,0);

    \draw[fill=gray!10] (0,0) rectangle (20,1);
    \draw[fill=gray!10] (11,-1) rectangle (20,0);
    \draw[fill=gray!10] (14,-2) rectangle (20,-1);

    \draw (10, 0.5) node {$x$} rectangle (10, 0.5);
    \draw (15.5, -0.5) node {$y$} rectangle (15.5, -0.5);
    \draw (5.5, -0.5) node {$u$} rectangle (5.5, -0.5);
    \draw (17, -1.5) node {$z$} rectangle (17, -1.5);
    \draw (12.5, -1.5) node {$v$} rectangle (12.5, -1.5);

    \draw[-] (0, 1) .. controls (6, 4) and (14, 4) .. (20, 1);
    \draw[-] (11, 1) .. controls (13.5, 2.5) and (17.5, 2.5) .. (20, 1);
    \draw[-] (14, 1) .. controls (15.5, 2) and (18.5, 2) .. (20, 1);
    
    \draw[-, dotted] (11, 0) -- (11, 1);
    \draw[-, dotted] (14, -1) -- (14, 1);
  \end{tikzpicture}
  \caption{Setting in Lemma~\ref{lem:step1}.}
\label{fig:lemm-gap-properties}
\end{figure}
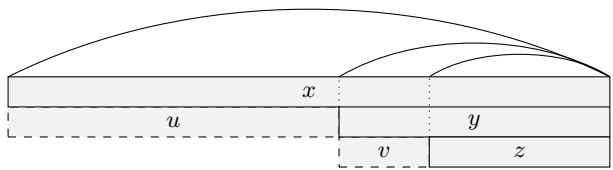

\begin{lemma}\label{lem:step1}
  Let $x$ be a generalized palindrome, $y$ the longest generalized palindromic proper suffix of $x$, and $z$ the longest generalized palindromic proper suffix of $y$. Let $u$
and $v$ be strings such that $x = u y$ and $y = v z$. Then:
\begin{enumerate}[(1)]
\item $| u | \ge | v |$;
\item if $| u | > | v |$ then $| u | > | z |$;
\item if $| u | = | v |$ then $u = v$.
\end{enumerate}
\end{lemma}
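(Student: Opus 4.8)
\textbf{Proof plan for Lemma~\ref{lem:step1}.}

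The plan is to work entirely with periods, invoking Lemma~\ref{lem:aux}(c) to translate every statement about longest generalized palindromic suffixes into a statement about smallest periods. Write $p=|u|$ and $r=|v|$. Since $y$ is the longest generalized palindromic proper suffix of $x$, part (c) of Lemma~\ref{lem:aux} tells us that $p=|x|-|y|$ is the smallest period of $x$; likewise $r=|y|-|z|$ is the smallest period of $y$. The whole argument then reduces to comparing these two smallest periods, and the workhorse will be the Fine and Wilf periodicity lemma applied to the overlap region of $x$ and $y$.

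For part (1), I would suppose toward a contradiction that $p<r$, i.e.\ the smallest period of $x$ is strictly smaller than the smallest period of $y$. Since $y$ is a suffix of $x$ and $p$ is a period of $x$, $p$ is also a period of $y$; but then $y$ would have a period strictly smaller than its smallest period, a contradiction. This gives $|u|\ge|v|$ immediately. For part (3), when $p=r$, the factor $u$ is the prefix of $x$ of length $p$ and $v$ is the corresponding prefix of $y$; because $p=r$ is a common period and $y$ sits as a suffix of $x$ aligned with this period, the two length-$p$ prefixes $u$ and $v$ must be equal (they are both determined, via the period $p$, by the same residue classes of positions). Here I expect to lean on the fact that $y$ is long enough relative to $p$ that the periodic structure forces $u=v$ positionally; the figure suggests exactly this alignment.

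Part (2) is the substantive case and I expect it to be the main obstacle. Assume $p=|u|>|v|=r$; the claim is $p>|z|$. The idea is that if instead $p\le|z|$, then the overlap between the period-$p$ structure of $x$ and the period-$r$ structure of $y$ inside the long suffix $z$ is big enough to apply Fine and Wilf: the region covered jointly has length at least $p+r$ (this is where $p\le|z|$ enters, guaranteeing the overlap is sufficiently long), so $\gcd(p,r)$ becomes a period of that region. One then argues that this forces a period of $y$ smaller than $r$, contradicting the minimality of $r$ as the smallest period of $y$ — unless $r$ divides $p$, a degenerate case that must be ruled out separately using that $p$ is the \emph{smallest} period of $x$ together with Lemma~\ref{lem:aux}(c). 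The delicate point is bookkeeping the exact lengths of the overlapping periodic regions so that the hypothesis $p>|v|$ together with the assumed $p\le|z|$ really yields an overlap of length $\ge p+r$; getting these inequalities tight is the crux, and the generalized-palindrome hypotheses feed in only through Lemma~\ref{lem:aux}, so that no separate combinatorial reasoning about the involution $f$ is needed beyond what that lemma already packages.
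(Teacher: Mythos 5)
Your parts (1) and (3) are fine and essentially match the paper: for (1) the paper makes the same minimal-period comparison (phrased as a dichotomy on whether $|u|>|y|$ rather than as a contradiction; note that under your assumption $p<r\le|y|$ the period $p$ of $x$ does restrict to a genuine, non-vacuous period of $y$, so your version is sound), and for (3) the paper simply observes that $v$ is a prefix of $x$ (because $y$ is a border of $x$ by Lemma~\ref{lem:aux}(a), so $v$, a prefix of $y$, is a prefix of $x$) and that $u$ is a prefix of $x$ by definition; your residue-class argument amounts to the same thing once you check $2p\le|x|$, which holds since $|x|=p+|y|$ and $|y|\ge p$.

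Part (2) is where you genuinely diverge from the paper, and where your plan is not yet a proof. The bookkeeping you flag as the crux does work out: assuming $p\le|z|$ gives $|y|=r+|z|\ge r+p\ge p+r-\gcd(p,r)$, and $y$ carries both periods $p$ (inherited from $x$, since $y$ is a factor of $x$ and $p<|y|$) and $r$, so Fine and Wilf applies to $y$ itself. The real gap is the case $r\mid p$, which you only promise to ``rule out separately''; as written nothing rules it out, and it is not a degenerate afterthought but the heart of the matter. To close it---and in fact to bypass Fine and Wilf entirely---argue directly that $x$ acquires period $r$: by Lemma~\ref{lem:aux}(a) $y$ occurs in $x$ both as a prefix and as a suffix, these two occurrences overlap in $|y|-p\ge r$ positions, and $y$ has period $r$, so every pair of positions $i,i+r$ in $x$ lies entirely inside one of the two occurrences; hence $r<p$ is a period of $x$, contradicting the minimality of $p$ guaranteed by Lemma~\ref{lem:aux}(c). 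The paper's own proof of (2) takes a different and shorter route: writing $x=vw$, it shows $z$ is a border of $w$ with $|w|=|u|+|z|\le 2|z|$, so by Lemma~\ref{lem:aux}(b) $w$ is a generalized palindromic proper suffix of $x$ longer than $y$, contradicting the choice of $y$ directly rather than passing through periods. The two contradictions are equivalent via Lemma~\ref{lem:aux}(c), but the paper's argument needs no periodicity lemma, whereas yours, once the $r\mid p$ case is handled as above, buys a purely period-theoretic proof in which the involution $f$ indeed enters only through Lemma~\ref{lem:aux}.
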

\begin{proof}
See Figure~\ref{fig:lemm-gap-properties} for an illustration.

(1) By Lemma~\ref{lem:aux}(c), $| u | = | x | - | y |$ is the smallest period of $x$, and $| v | = | y | - | z |$ is the smallest period of $y$. Since $y$ is a factor of $x$, either $| u | > | y | > | v |$ or $| u |$ is a period of $y$ too, and thus it cannot be smaller than $| v |$.

(2) By Lemma~\ref{lem:aux}(a), $y$ is a border of $x$ and thus $v$ is a prefix of $x$. Let $w$ be a string such that $x = v w$. Then $z$ is a border of $w$ and $| w | = | zu |$ (see Figure~\ref{fig:lemm-gap-properties2}). Since we assume $| u | > | v |$, we must have $| w | > | y |$. Suppose to the contrary that $| u | \le | z |$. Then $| w | = | zu | \le 2 | z |$, and by Lemma~\ref{lem:aux}(b), $w$ is a generalized palindrome. But this contradicts $y$ being the longest generalized palindromic proper suffix of $x$.

\begin{figure}[htpb]
  \centering\small
  \begin{tikzpicture}[scale=0.40]
    \draw[dashed, fill=gray!10] (6,-2) rectangle (9,-1);
    \draw[dashed, fill=gray!10] (0,2) rectangle (3,3);
    \draw[dashed, fill=gray!10] (0,-1) rectangle (6,0);

    \draw[fill=gray!10] (0,0) rectangle (20,1);
    \draw[fill=gray!10] (0,1) rectangle (14,2);
    \draw[fill=gray!10] (3,2) rectangle (14,3);
    \draw[fill=gray!10] (3,3) rectangle (20,4);
    \draw[fill=gray!10] (6,-1) rectangle (20,0);
    \draw[fill=gray!10] (9,-2) rectangle (20,-1);

    \draw (10,0.5) node {$x$} rectangle (10,0.5);
    \draw (7,1.5) node {$y$} rectangle (7,1.5);
    \draw (8.5,2.5) node {$z$} rectangle (8.5,2.5);
    \draw (1.5,2.5) node {$v$} rectangle (1.5,2.5);
    \draw (11.5,3.5) node {$w$} rectangle (11.5,3.5);
    \draw (13,-0.5) node {$y$} rectangle (13,-0.5);
    \draw (3,-0.5) node {$u$} rectangle (3,-0.5);
    \draw (14.5,-1.5) node {$z$} rectangle (14.5,-1.5);
    \draw (7.5,-1.5) node {$v$} rectangle (7.5,-1.5);

    \draw[<->] (14,1.5) -- (20,1.5);
    \draw (17, 1.4) node[above] {$|u|$} rectangle (17, 1.4);
  \end{tikzpicture}
  \caption{Proof of Lemma~\ref{lem:step1}(2): if $|u|>|v|$ and
    $|u|\leq|z|$ then $w$ is a generalized palindromic proper suffix of $x$ longer than $y$.}
\label{fig:lemm-gap-properties2}
\end{figure}
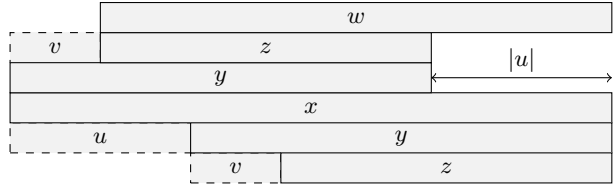

(3) In the proof of (2) we saw that $v$ is a prefix of $x$, and so is $u$ by definition. Thus $u = v$ if $| u | = | v |$.
\end{proof}

We have thus shown that, also in case of generalized palindromes, the set $P_j$ can be compactly represented by a set $G_j$, as described in Section~\ref{sec:first}. To complete Step 1 of the algorithm, we need to show that $G_j$ can be computed from $G_{j-1}$ in $\Oh(\log j)$ time. For this, just as in \cite{Fici:2014:SAM:2953214.2953656}, we show that each triple $(i,\Delta,k) \in G_{j-1}$ will be either eliminated or replaced by $(i-1,\Delta,k)$ in $G_j$. The proof exploits part (3) of Lemma~\ref{lem:step1}.

\begin{lemma}
Let $p_i$ and $p_{i + 1}$ be two consecutive elements of $P_{j - 1 ,\Delta}$. Then $p_i - 1 \in P_j$ iff $p_{i + 1} - 1 \in P_j$.
\end{lemma}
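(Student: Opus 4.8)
The plan is to reduce the statement to a single equality between two characters of $S$ and then to extract that equality from the periodic structure captured by Lemma~\ref{lem:step1}(3). First I would observe that, once the inner factor is already a generalized palindrome, shrinking a start index into $P_j$ is governed by just one character comparison. Indeed, since $p_i \in P_{j-1}$, the factor $S[p_i \dd j-1]$ is a generalized palindrome, so appending $S[j]$ on the right and $S[p_i-1]$ on the left produces a generalized palindrome $S[p_i-1\dd j]$ if and only if the newly created outermost pair matches, i.e.\ $S[p_i-1]=f(S[j])$. Hence $p_i-1\in P_j$ iff $S[p_i-1]=f(S[j])$, and likewise $p_{i+1}-1\in P_j$ iff $S[p_{i+1}-1]=f(S[j])$. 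Consequently it suffices to prove the character equality $S[p_i-1]=S[p_{i+1}-1]$, after which both biconditionals share the same right-hand side and the claim follows.

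Next I would set up the periodic structure around $p_i$. Being consecutive elements of $P_{j-1,\Delta}$, the positions satisfy $p_{i+1}=p_i+\Delta$; moreover, by the defining property of $P_{j-1,\Delta}$, the predecessor of $p_i$ in $P_{j-1}$ is exactly $p_i-\Delta$, so $p_i-\Delta\in P_{j-1}$ as well (in particular $p_i\ge\Delta+1\ge 2$, so that $p_i-1$ is a legitimate position). Thus $S[p_i-\Delta\dd j-1]$, $S[p_i\dd j-1]$, $S[p_{i+1}\dd j-1]$ are three generalized palindromes whose start indices are consecutive in $P_{j-1}$, and by Lemma~\ref{lem:aux}(c) each is the longest generalized palindromic proper suffix of its predecessor.

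Then I would invoke Lemma~\ref{lem:step1}(3) on the triple $x=S[p_i-\Delta\dd j-1]$, $y=S[p_i\dd j-1]$, $z=S[p_{i+1}\dd j-1]$. Writing $x=uy$ and $y=vz$ gives $u=S[p_i-\Delta\dd p_i-1]$ and $v=S[p_i\dd p_{i+1}-1]$, both of length $\Delta$. Since $|u|=|v|$, part (3) of the lemma yields $u=v$; comparing the last letters of these two equal strings gives precisely $S[p_i-1]=S[p_{i+1}-1]$, the character equality required in the first paragraph.

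The one step that needs care is the choice of triple fed into Lemma~\ref{lem:step1}: the equality we want lives at position $p_i-1$, strictly to the left of the palindrome $S[p_i\dd j-1]$, so comparing only $S[p_i\dd j-1]$ with $S[p_{i+1}\dd j-1]$ is not enough. The key idea is to anchor the argument one palindrome further out, at the predecessor $p_i-\Delta$, whose existence is guaranteed precisely by the definition of $P_{j-1,\Delta}$; this is what makes $|u|=|v|=\Delta$ and lets part (3) --- rather than parts (1) or (2) --- apply. After that the argument is routine, and transitivity along $P_{j-1,\Delta}$ shows that in fact all of its elements behave uniformly with respect to shrinking into $P_j$.
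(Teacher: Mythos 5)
Your proof is correct and takes essentially the same route as the paper's: you anchor the argument at the predecessor $p_i-\Delta$ guaranteed by membership in $P_{j-1,\Delta}$, apply Lemma~\ref{lem:step1}(3) to the triple $x=S[p_i-\Delta\dd j-1]$, $y=S[p_i\dd j-1]$, $z=S[p_{i+1}\dd j-1]$ to obtain $S[p_i-1]=S[p_{i+1}-1]$, and reduce each membership in $P_j$ to the single character test against $f(S[j])$. This matches the paper's proof step for step, so no further comment is needed.
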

\begin{proof}
By definition, $p_{i + 1} - p_i = \Delta$, and the predecessor of $p_i$ in $P_j$ is $p_{i - 1} = p_i - \Delta$. The strings $x=S[p_{i-1} \dd j-1]$, $y=S[p_{i} \dd j-1]$, and $z=S[p_{i+1} \dd j-1]$ form the situation of Lemma~\ref{lem:step1}(3). Hence, $S[p_{i}-1]=S[p_{i+1}-1]=c$. Thus, $p_i - 1 \in P_j$ iff $S[j]=f(c)$ iff $p_{i + 1} - 1 \in P_j$.
\end{proof}

After this transformation, one might need to update pairs of adjacent triples in $G_j$ because the gaps between them might have changed. This simple process is explained in detail in \cite{Fici:2014:SAM:2953214.2953656} and takes only $\Oh(\log j)$ additional time.

As for Step 2 of the algorithm, it suffices to show that the following combinatorial observation holds for generalized palindromes. Again we follow the lines of the proof from \cite{Fici:2014:SAM:2953214.2953656}; Figure~\ref{fig:final_lemma} (thanks to the courtesy of the authors of~\cite{Fici:2014:SAM:2953214.2953656}) is included for illustration.

\begin{figure}[htpb]
  \centering
  \subfloat[]{
  \centering\small
  \begin{tikzpicture}[scale=0.40]
    \draw[fill=gray!10] (0, 2) rectangle (5, 3);
    \draw[fill=gray!10] (5, 2) rectangle (20, 3);
    \draw[fill=gray!10] (0, -1) rectangle (15, 0);
    \draw[fill=gray!10] (15, -1) rectangle (20, 0);

    \draw (12.5, 2.5) node {$y$} rectangle (12.5, 2.5);
    \draw (0.8, 2) node[below] {$i\hspace{-0.1cm}-\hspace{-0.1cm}\Delta$} rectangle (0.8, 2);
    \draw (0.8, 0) node[above] {$i\hspace{-0.1cm}-\hspace{-0.1cm}\Delta$} rectangle (0.8, 0);
    \draw (14.1, 0) node[above] {$j\hspace{-0.1cm}-\hspace{-0.1cm}\Delta$} rectangle (14.1, 0);
    \draw (3.9, 0) node[above] {$\ell\hspace{-0.1cm}-\hspace{-0.1cm}\Delta$} rectangle (3.9, 0);
    \draw (5.3, 2) node[below] {$i$} rectangle (5.3, 2);
    \draw (19.7, 2) node[below] {$j$} rectangle (19.7, 2);
    \draw (19.7, 0) node[above] {$j$} rectangle (19.7, 0);
    \draw (8.4, 2) node[below] {$\ell$} rectangle (8.4, 2);
    \draw (7.5, -0.5) node {$y$} rectangle (7.5, -0.5);

    \draw[-, dotted, semithick] (8.0, 2) -- (8.0, 3);
    \draw[-, dotted, semithick] (3.0, -1) -- (3.0, 0);

    \draw[-] (0, 3) .. controls (6,7) and (14,7) .. (20, 3);
    \draw[-] (5, 3) .. controls (10,6) and (15,6) .. (20, 3);
    \draw[-, dashed] (8, 3) .. controls (11,5) and (17,5) .. (20, 3);
    \draw[-, dashed] (3, -1) .. controls (6, -3) and (12, -3) .. (15, -1);
    \draw[-] (0, -1) .. controls (5, -4) and (10, -4) .. (15, -1);
  \end{tikzpicture}
  \label{fig:gpl-correctness-1}
  }

  \subfloat[]{
  \centering\small
  \begin{tikzpicture}[scale=0.40]
    \draw[fill=gray!10] (0,0) rectangle (4,1);
    \draw[fill=gray!10] (4,0) rectangle (8,1);
    \draw[fill=gray!10] (8,0) rectangle (15,1);
    \draw[fill=gray!10] (15,0) rectangle (19,1);
    \draw[fill=gray!10] (19,0) rectangle (23,1);

    \draw (2,0.5) node {$w$} rectangle (2,0.5);
    \draw (6,0.5) node {$w$} rectangle (6,0.5);
    \draw (17,0.5) node {$f(w^{R})$} rectangle (17,0.5);

    \draw (21,0.5) node {$f(w^{R})$} rectangle (21,0.5);

    \draw (22.6,0) node[below] {$j$} rectangle (22.6,0);
    \draw (18.15,0) node[below] {$j\hspace{-0.1cm}-\hspace{-0.1cm}\Delta$} rectangle (18.15,0);
    \draw (4.8,0) node[below] {$i\hspace{-0.1cm}-\hspace{-0.1cm}\Delta$} rectangle (4.8,0);
    \draw (1,0) node[below] {$i\hspace{-0.1cm}-\hspace{-0.1cm}2\Delta$} rectangle (1,0);

    \draw[-] (4, 1) .. controls (9,3) and (14,3) .. (19, 1);
    \draw[-] (4, 1) .. controls (10,5) and (17,5) .. (23, 1);
    \draw[-,dashed] (0, 1) .. controls (6,5) and (13,5) .. (19, 1);
  \end{tikzpicture}
  \label{fig:gpl-correctness-2}
  }
  \caption{Proof of Lemma~\ref{lemm:gpl-correctness}. (a) $\ell\in P_j$ iff
    $\ell-\Delta\in P_{j-\Delta}$ for all $\ell\in[i..j]$. (b)
    If $i-2\Delta\in P_{j-\Delta}$
    then $S[i-2\Delta..j]$ is a generalized palindrome.}
    \label{fig:final_lemma}
\end{figure}
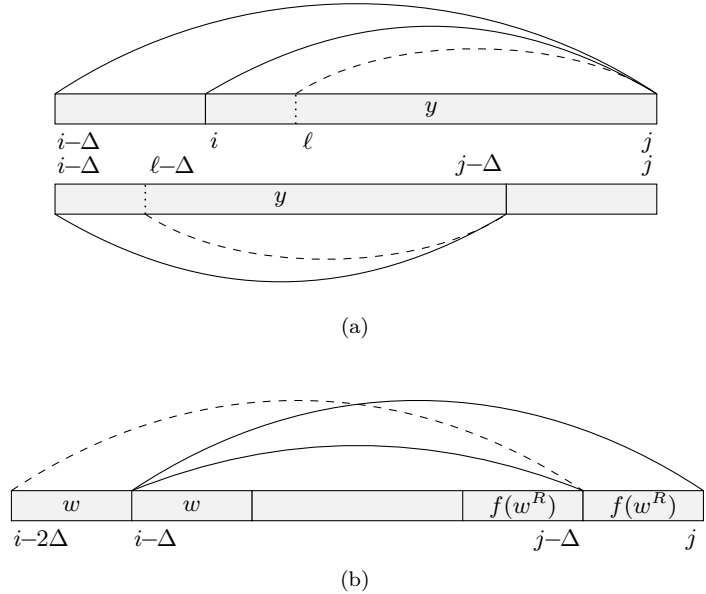

\begin{lemma}\label{lemm:gpl-correctness}
If $(i,\Delta,k) \in G_j$ and $k \geq 2$, then $(i,\Delta,k-1) \in G_{j-\Delta}$.
\end{lemma}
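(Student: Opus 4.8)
The plan is to show that the arithmetic progression of starting positions encoded by $(i,\Delta,k)$ reappears in $P_{j-\Delta}$, shifted so that it keeps the same first position $i$ but loses its shortest palindrome, the one beginning at $i+(k-1)\Delta$. The whole argument is driven by periodicity. Since $k\ge 2$ the value $\Delta$ is finite, so $i$ has a predecessor in $P_j$ at distance exactly $\Delta$, namely $i-\Delta\in P_j$, and $i$ is its immediate successor. Hence $S[i\dd j]$ is the longest generalized palindromic proper suffix of the generalized palindrome $S[i-\Delta\dd j]$, and \cref{lem:aux}(c) tells us that $\Delta$ is the smallest period of $S[i-\Delta\dd j]$; equivalently $S[p]=S[p+\Delta]$ for every $p\in[i-\Delta,j-\Delta]$.

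Next I would prove the correspondence illustrated in \cref{fig:final_lemma}(a): for every $\ell\in[i\dd j]$ one has $\ell\in P_j$ iff $\ell-\Delta\in P_{j-\Delta}$. Indeed, when $\ell\ge i$ the window $[\ell-\Delta,j-\Delta]$ is contained in $[i-\Delta,j-\Delta]$, so the period yields $S[\ell-\Delta\dd j-\Delta]=S[\ell\dd j]$, and since these are the same string one is a generalized palindrome exactly when the other is. This gives a shift-by-$\Delta$ bijection from $P_j\cap[i\dd j]$ onto $P_{j-\Delta}\cap[i-\Delta\dd j-\Delta]$ that preserves consecutive differences. In particular the block $\{i,i+\Delta,\dots,i+(k-1)\Delta\}$ is sent to $\{i-\Delta,i,\dots,i+(k-2)\Delta\}\subseteq P_{j-\Delta}$, whose elements are consecutive in $P_{j-\Delta}$ with common difference $\Delta$; and because the successor of $i+(k-1)\Delta$ in $P_j$ (if any) lies at distance smaller than $\Delta$, so does the successor of $i+(k-2)\Delta$ in $P_{j-\Delta}$. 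This already pins down the common difference, the first $k-1$ members, and the upper end of the candidate block.

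The remaining, and I expect hardest, task is the lower boundary: I must show that the block of $P_{j-\Delta}$ does not start one step earlier at $i-\Delta$. Since the consecutive differences of $P_{j-\Delta}$ are non-increasing (\cref{lem:step1}(1)), it suffices to prove $i-2\Delta\notin P_{j-\Delta}$, for then the predecessor of $i-\Delta$ sits at distance exceeding $\Delta$ and $i$ genuinely opens the maximal $\Delta$-block. I would argue by contradiction, following \cref{fig:final_lemma}(b). Assume $i-2\Delta\in P_{j-\Delta}$, so $S[i-2\Delta\dd j-\Delta]$ is a generalized palindrome. Its proper suffix $S[i-\Delta\dd j-\Delta]$ equals $S[i\dd j]$ by the period, hence is a generalized palindrome, and \cref{lem:aux}(c) makes $\Delta$ a period of $S[i-2\Delta\dd j-\Delta]$ as well. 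Gluing this to the period already available on $[i-\Delta,j]$ extends $S[p]=S[p+\Delta]$ to all $p\in[i-2\Delta,j-\Delta]$, so $S[i-\Delta\dd j]$ is a border of $S[i-2\Delta\dd j]$ with $|S[i-2\Delta\dd j]|\le 2\,|S[i-\Delta\dd j]|$. By \cref{lem:aux}(b) the string $S[i-2\Delta\dd j]$ is then a generalized palindrome, i.e.\ $i-2\Delta\in P_j$. This contradicts the definition of the triple $(i,\Delta,k)$: being the first position of its $\Delta$-block, $i-\Delta$ is either the minimum of $P_j$ or has its own predecessor at distance greater than $\Delta$, and in both cases $i-2\Delta\notin P_j$.

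Combining the two boundaries, the maximal $\Delta$-block of $P_{j-\Delta}$ is exactly $\{i,i+\Delta,\dots,i+(k-2)\Delta\}$, that is $(i,\Delta,k-1)\in G_{j-\Delta}$, as claimed. The only genuine computations are the two period-gluing steps and the length inequality needed to invoke \cref{lem:aux}(b); everything else is bookkeeping that transports the block structure of $P_j$ to $P_{j-\Delta}$ through the bijection of part (a).
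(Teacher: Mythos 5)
Your proof is correct and follows essentially the same two-part strategy as the paper's: a shift-by-$\Delta$ correspondence between $P_j\cap[i\dd j]$ and $P_{j-\Delta}\cap[i-\Delta\dd j-\Delta]$ (via the border/period of $S[i-\Delta\dd j]$), followed by a contradiction showing $i-2\Delta\notin P_{j-\Delta}$ because otherwise $i-2\Delta\in P_j$ and $i-\Delta\in P_{j,\Delta}$. The only cosmetic difference is that you derive $S[i-2\Delta\dd j]\in$ (generalized palindromes) by gluing periods and invoking Lemma~\ref{lem:aux}(b)(c), whereas the paper writes the string explicitly as $wzf(w^R)$; the two arguments are interchangeable.
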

\begin{proof}
By definition, $( i , \Delta, k ) \in G_j$ is equivalent to saying that $P_{j ,\Delta} = \{ i , i + \Delta, \ldots , i + ( k - 1 )\Delta\}$, and we need to show that $P_{j -\Delta,\Delta} = \{ i , i + \Delta, \ldots, i + ( k - 2 )\Delta\}$. We will show first that $P_{j -\Delta,\Delta} \cap [ i - \Delta + 1 \dd j - \Delta] = \{ i , i + \Delta, \ldots, i + ( k - 2 )\Delta\}$ and then that $P_{j -\Delta,\Delta} \cap [ 1 \dd i - \Delta] = \emptyset$.

Since $y = S [ i \dd j ]$ and $x = S [ i - \Delta \dd j ]$ are generalized palindromes and $y$ is the longest proper border of $x$ (by Lemma~\ref{lem:aux}(a)), $S [ i - \Delta \dd j - \Delta] = y = S [ i \dd j ]$. Thus for all $\ell \in [ i \dd j ]$, $\ell \in P_j$ iff $\ell - \Delta \in P_{j -\Delta}$ (see Figure~\ref{fig:gpl-correctness-1}). In particular, the consecutive differences in both cases are the same and for all $\ell \in [ i + 1 \dd j ]$, $\ell \in P_{j ,\Delta}$ iff $\ell - \Delta \in P_{j -\Delta,\Delta}$. Thus $P_{j -\Delta,\Delta} \cap [ i - \Delta + 1 \dd j - \Delta] = \{ i , i + \Delta, \ldots, i + ( k - 2 )\Delta\}$.

We still need to show that $P_{j -\Delta,\Delta} \cap [ 1 \dd i - \Delta] = \emptyset$, which is true if and only if $i - 2 \Delta \not\in P_{j -\Delta}$. Suppose to the contrary that $S [ i - 2 \Delta\dd j -\Delta]$ is a generalized palindrome and let $w = S [ i - 2 \Delta\dd i - \Delta - 1 ]$. Then $S [ j - 2 \Delta + 1 \dd j - \Delta] = f(w^R)$. Since $z = S [ i - \Delta\dd j - \Delta]$ and $S [ i - \Delta\dd j ]$ are generalized palindromes too, we have that $S [ i - \Delta \dd i - 1 ] = w$ and $S [ j - \Delta + 1 \dd j ] = f(w^R)$. Finally, since $z$ is a generalized palindrome, $S [ i - 2 \Delta \dd j ] = w zf(w^R)$ is a generalized palindrome (see Figure~\ref{fig:gpl-correctness-2}). This implies that $i - 2 \Delta \in P_j$ and thus $i - \Delta \in P_{j ,\Delta}$, which is a contradiction.
\end{proof}


\end{document}